\newtheorem{definition}{Definition}
\newtheorem{lemma}{Lemma}
\newtheorem{remark}{Remark}
\newtheorem{assumption}{Assumption}
\newtheoremstyle{break}
  {}
  {}
  {\itshape}
  {}
  {\bfseries}
  {.}
  {\newline}
  {}
\theoremstyle{break}
\newtheorem{theorem}{Theorem}
\newenvironment{myproof1}{\noindent {\bf Proof of Theorem \ref{thm:lin}}}{\hfill$\blacksquare$}
\newenvironment{myproof2}{\noindent {\bf Proof of Theorem \ref{thm:polypoa}}}{\hfill$\blacksquare$}
\newenvironment{myproof3}{\noindent {\bf Proof of Theorem \ref{thm:routing}}}{\hfill$\blacksquare$}
\definecolor{Granata}{rgb}{0.64,0,0} 
\newcommand{\dar}{\textcolor{black}}
\newcommand{\fra}{\textcolor{black}}
\newcommand\mc[1]{\mathcal{#1}}
\newcommand\mb[1]{\mathbb{#1}}
\newcommand\R{\mathbb{R}}
\renewcommand\i{^i}
\renewcommand\j{^j}
\newcommand{\N}{M}
\newcommand\NE{_\textup{N}}
\newcommand\WE{_\textup{W}}
\newcommand\SO{_\textup{S}}
\newcommand\s{\Sigma}
\newcommand\poa{\textup{PoA}}
\title{\LARGE \bf On the Efficiency of Nash Equilibria in \dar{Aggregative} Charging Games
}
\author{Dario Paccagnan, Francesca Parise and John Lygeros
\thanks{This work was supported by the SCCER FEEB\&D, by the European Commission project DYMASOS (FP7-ICT 611281), and by the SNSF grant numbers P1EZP2 172122 and P2EZP2 168812. D. Paccagnan and J. Lygeros are with the Automatic Control Laboratory, ETH Z\"{u}rich, Switzerland. Email:
        {\{\tt\footnotesize dariop,lygeros\}@control.ee.ethz.ch.}
F. Parise is with the Laboratory for Information and Decision Systems, MIT, Cambridge, MA, USA. Email: {\tt parisef@mit.edu}.
        }}
\begin{document}

\maketitle
\thispagestyle{empty}
\pagestyle{empty}

\begin{abstract}
Several works have recently suggested to model the problem of coordinating the charging needs of a fleet of electric vehicles as a game, and have  proposed distributed algorithms to coordinate the vehicles towards a Nash equilibrium of such game. However, Nash equilibria have been shown to posses desirable system-level properties only in  simplified cases. 
In this work, we use the concept of price of anarchy to analyze the inefficiency of Nash equilibria when compared to the social optimum solution. More precisely, we show that i)  for linear price functions depending on all the charging instants, the price of anarchy converges to one as the population of vehicles grows; ii) 
for price functions that depend only on the instantaneous demand, the price of anarchy  converges to one if the price function takes the form of a positive pure monomial; \fra{iii) for general classes of price functions, the asymptotic price of anarchy can be bounded.}
For finite populations, we additionaly provide a bound on the price of anarchy as a function of the number vehicles in the system. We support \mbox{the theoretical findings
by means of numerical simulations.}
\end{abstract}
\vspace*{-5mm}
\section{Introduction}
In the last decade we have witnessed a profound change in the way  energy systems are operated.  
A new paradigm called demand response is emerging, according to which the energy requirements of a population of users are tuned, by means of incentives, to account for the operational needs of the power grid \cite{albadi2007demand}. Previous works \cite{ma2013decentralized,gan2013optimal} have suggested to model these demand response methods as a game. Therein each player represents a user that needs to optimize his energy consumption over a given period of time, with the objective of minimizing his electricity bill. What couples the users, and thus makes the charging problem a game, is the assumption that the energy price depends at every instant of time on the sum of the energy demand of the whole population. 
The seminal paper \cite{ma2013decentralized} shows that the (unique) Nash equilibrium of such game has desirable properties from the standpoint of the grid operator, in the case of large and homogeneous populations. Under these assumptions, \cite{ma2013decentralized} shows that the equilibrium is socially optimum in the sense that it minimizes the collective electricity bill (including the cost of both flexible and inflexible demand) and  fills the overnight demand valley. 
\dar{As a result, a rich body of literature has focused on devising distributed and decentralized schemes that are numerically efficient, and can be used by the grid operator to coordinate the strategies of the agents to a Nash equilibrium \cite{ma2013decentralized,grammatico:parise:colombino:lygeros:14, dario2015aggregative, gan2013optimal, chen2014autonomous,paccagnan2016distributed}.} 
Less attention has been devoted to verify whether the optimality statement made in \cite{ma2013decentralized} is still valid in the presence of more general cost functions, agents heterogeneity and realistic charging constraints (e.g., upper bounds on the instantaneous charging, different charging windows, ramping constraints). 
Nonetheless, this is a fundamental prerequisite for the applicability of the aforementioned coordination schemes.\footnote{
While there are multiple factors impacting the choice of a control scheme, if the Nash equilibria do not posses desirable properties, the grid operator has limited incentive in coordinating the agents to such a strategy profile.
}
\\
\indent
\dar{Following \cite{ma2013decentralized}, the efficiency of the Nash equilibrium has been  studied in \cite{Gonz2015}, under the assumption of linear price functions. Both \cite{ma2013decentralized} and \cite{Gonz2015}  focus on \textit{simplex constraints and homogeneous populations}. 
The homogeneity assumption is relaxed in  \cite{deori2016nash}, where the authors provide similar efficiency results of those in \cite{ma2013decentralized}, but limited to \emph{linear} price functions and in a probabilistic sense.
\emph{Non linear price functions} are considered in \cite{deconvergence}, but the efficiency results pertain to the notion of Wardrop equilibrium and charging constraints are limited to upper bounds.
We observe that all the previous works assume that the price at time $t$ depends only on the consumption at the same time instant.
Finally, we note that \cite{Beaude12} also provides efficiency bounds for charging games, but the setup considered therein is different, in that each agent's decision variable is limited to its starting charging time.}

\dar{The aim of this paper is to provide  efficiency results for the Nash equilibrium of aggregative charging games under different assumptions involving {\it finite populations} of vehicles, {\it general convex constraints}, {\it non linear price functions} and {\it  price dependence on different time instants}}. To do so, we model the charging problem as an aggregative game \cite{jensen2010aggregative}, and study the equilibrium efficiency using the notion of \textit{price of anarchy} ($\poa$). The $\poa$ is a measure introduced in game theory to quantify how much selfish behavior degrades the performance of a given system \cite{koutsoupias1999worst}.  
By definition, $\poa\ge1$ and the closer to $1$ the better the overall performance of the system. The result in \cite{ma2013decentralized} can be equivalently stated as the fact that for homogeneous populations with simplex constraints, the $\poa$ converges to~$1$ as the population size grows.
Our main contributions are:
\begin{enumerate}[leftmargin=*]
\vspace*{-1mm}
\item We show that the $\poa$ for charging games with linear price function (that might however depend on all charging instants) and generic convex constraints always converges to $1$, complementing \cite{ma2013decentralized, Gonz2015, deori2016nash, deconvergence};
\item For charging games with generic convex constraints and nonnegative price function that depends only on the instantaneous demand, we show that the $\poa$ converges to $1$ if the price function is a positive pure monomial (i.e., $\alpha z^k$ for some $\alpha, k>0$). 
On the contrary, if the price function does not have this form, it is possible to construct a sequence of games whose $\poa$ does not converge to $1$. 
\fra{In such cases, we show how results for routing games\cite{roughgarden2003price, correa2004selfish} can be used to bound the asymptotic value of the price of anarchy.} 

\item In all the previous cases we provide an explicit bound connecting the efficiency of the equilibria with the (finite) number of vehicles in the game. To the best of our knowledge, this is the first result providing a bound on $\poa$ as a function of the population size, for charging games with general convex constraints and price functions.

\end{enumerate}


\subsubsection*{\bf \emph{Organization}} Section \ref{sec:PF} includes the game formulation and some preliminary notions. In Section \ref{sec:poalarge}  we define the efficiency metric used throughout this manuscript, and present the main results for linear and nonlinear price functions. Section \ref{sec:ATP} focuses on the application of charging a fleet of electric vehicles. All the proofs are reported in the Appendix. 
\subsubsection*{\bf \emph{Notation}}
$\R^n_{\ge0}$ and $\R^n_{>0}$ denote the elements of $\R^n$ whose components are non negative and positive;
$\mathbb{0}_n\in\mb{R}^{n}$ (resp. $\mathds{1}_n$)  is the column vector of zero (resp. unit) entries.
Given $A\in\mathbb{R}^{n\times n}$ not necessarily symmetric, $A\succ0$ $\Leftrightarrow$ $x^\top A x>0,$ $\forall x\neq 0$. 
 Given $g(x):\mathbb{R}^n \rightarrow \mathbb{R}^m$ we define the matrix $\nabla_{x} g(x) \in \mathbb{R}^{n\times m}$ component-wise as
$[\nabla_x g(x)]_{i,j}\coloneqq \frac{\partial g_j(x)}{\partial x\i}$. 
An operator $F:\mc{K}\subset\R^n \rightarrow \R^n$ is called $\alpha$ strongly monotone if $(F(x)-F(y))^\top (x-y)\ge\alpha||x-y||^2$ for some $\alpha>0$, $\forall x,y\in\mc{K}$; $\mathcal{U}[a,b]$ is the uniform distribution on the real interval $[a, b]$. 
\vspace*{-3mm}
\section{Problem formulation}
\label{sec:PF}
Let us consider a population of $\N$ agents, each choosing an action $x\i\!\in\!\mc{X}\i\!\subseteq\!\mb{R}^n$. Agent $i$ incurs the cost $J\i(x\i,\sigma(x)):\mb{R}^n\times \mb{R}^n\rightarrow \mb{R}$ that depends on his own action $x\i \!\in\! \mc{X}\i$ and on the average action $\sigma(x)\!\coloneqq\!\frac{1}{\N}\!\sum_{j=1}^{\N}x\j$ of the population, as typical of aggregative games~\cite{jensen2010aggregative}. We assume that
\vspace*{-1mm}
\begin{equation} 
J\i(x\i,\sigma(x)) \coloneqq p(\sigma(x)+d)^\top x\i\,,
\label{eq:costs}
\vspace*{-1mm}
\end{equation} 
with $d\in\mathbb{R}_{\ge0}^n$ and $p:\R^n\to\R^n$.
The cost in~\eqref{eq:costs} can be used to describe applications where $x^i$ denotes the usage level of a certain commodity,
whose per-unit cost $p$ depends on the average usage level of the  other players plus some inflexible normalized usage level $d$~\cite{ma2013decentralized,chen2014autonomous}. We denote with $\mc{X} \coloneqq \mc{X}^1\times\ldots\times\mc{X}^\N$, and identify such game with the tuple 
\vspace*{-1mm}
\begin{equation}
\label{eq:gameG}
\mathcal{G}\coloneqq\{\N,\{\mathcal{X}^i\}_{i=1}^\N,p\}. 
\vspace*{-4mm}
\end{equation}
\subsection{Nash, Wardrop equilibrium and social optimizer}
\vspace*{-1mm}
We consider two notions of equilibrium for the game $\mc{G}$. 
\vspace*{-2mm}
\begin{definition}[Nash equilibrium \cite{nash1950equilibrium}]\label{def:NE}
A set of actions $x\NE = [x^1\NE; \dots; x^\N\NE] \in \R^{\N n}$ is a Nash equilibrium of the game $\mathcal{G}$  if $x\NE\in\mc{X}$ and for all $ i\in\{1,\dots,\N\}$ and all $ x\i \in\mc{X}\i$ 
\vspace*{-1mm}
\begin{equation}
 J\i(x\i\NE,\sigma(x\NE)) \le J\i\biggl( x\i,\frac 1\N x\i  +  \frac 1\N \sum_{j \neq i} x\j\NE \biggr ). 
\label{eq:def_NE}
\vspace*{-2mm}
\end{equation}
\end{definition}
Observe that on the right-hand side of \eqref{eq:def_NE} the variable $x^i$ appears in both arguments of $J^i(\cdot,\cdot)$.
As the population size grows, the contribution of an agent to the average decreases. This motivates the definition of Wardrop equilibrium.
\vspace*{-1mm}
\begin{definition}[Wardrop equilibrium \cite{wardrop1952road,Gentilearxiv17}]\label{def:WE}
A set of actions $x\WE = [x^1\WE; \dots; x^\N\WE] \in \R^{\N n}$ is a Wardrop equilibrium of $\mathcal{G}$  if $x\WE\in\mc{X}$, and for all $i\in\{1,\dots,\N\}$, and all $x\i\in\mc{X}\i$,
\vspace*{-1mm}
\begin{equation}
J\i(x\i\WE,\sigma(x\WE)) \le J\i ( x\i,\sigma(x\WE) )\,.
 \label{eq:def_WE}
 \vspace*{-2mm}
\end{equation}
\end{definition}
\noindent Note that in this latter definition the average is fixed to $\sigma(x\WE)$ on both sides of \eqref{eq:def_WE}.
\noindent 
Consequently, a feasible set of actions is a Wardrop equilibrium if no agent can improve his cost, assuming that the average action is \emph{fixed}.
\begin{definition}[Social optimizer]
A set of actions $x\SO = [x^1\SO; \dots; x^\N\SO] \in \R^{\N n}$ is a social optimizer of $\mathcal{G}$  if $x\SO\in\mc{X}$ and it minimizes the cost 
$
J\SO(\sigma(x))\coloneqq p(\sigma(x)+d)^\top(\sigma(x)+d).
$
\end{definition}
\noindent Note that the cost $J\SO$ 
 is the sum of all the players costs, divided by $\N$, and the additional term $p(\sigma(x)+d)^\top d$. The reason why the latter term is included is that we want to compute the  total cost of buying the commodity both for the flexible ($\sigma(x)$) and inflexible ($d$) users. 
 This cost was first introduced in \cite{ma2013decentralized} and then used in \cite{Gonz2015, deori2016nash, deconvergence}.
 The inflexible usage level is sometimes modeled in the literature \cite{deori2016nash} as an additional player with constraint set represented by $\{x\in\R^n\mid x=d\cdot \N\}$, where $d$ is the \emph{normalized} inflexible demand. We do not follow such approach here because we are interested in large populations and  this set is unbounded as $\N\rightarrow\infty$. 
 Throughout the manuscript, we denote with $
\s\coloneqq \bigl\{z\in \R^n ~|~ z\!=\!\frac{1}{\N}\sum_{j=1}^\N x^j,~x^j\!\in\!\mc{X}^j,~\forall~j=1,\dots,\N \bigr\}.
$
 \begin{assumption}
 \label{A1}
For $i\in\{1,\dots,\N\}$, the constraint set $\mathcal{X}\i$ is closed, convex, non empty.
For $z\in \s$, the function $z\mapsto p(z+d)$ is continuously differentiable and strongly monotone while $z\mapsto p(z+d)^\top(z+d)$ is strongly convex.
\vspace*{-1mm}
 \end{assumption}
 
  \noindent We denote with $L\SO$, $L_p$ the Lipschitz constant of $J\SO(\cdot)$, $p(\cdot)$, and with $\alpha$ the monotonicity constant of $p(\cdot)$. 
 \vspace*{-4mm}
 \section{Price of Anarchy for finite and large populations}
 \label{sec:poalarge}
 In this section we study the efficiency of equilibria as a function of the population size $\N$.
To do so, we consider a sequence of games $(\mc{G}_\N)_{\N=1}^\infty$ of increasing population size. For fixed $\N$, the game $\mc{G}_\N$ is played amongst $\N$ agents and is defined as in \eqref{eq:gameG} with arbitrary sets $\{\mc{X}\i\}_{i=1}^{\N}$. The function $p$ is instead the same for every game of the sequence.
 
\begin{assumption}
\vspace*{-1mm}
\label{ass:sequence}
There exists a convex, compact set $\mathcal{X}_0\subset\R^n$ s.t. $\cup_{i=1}^\N \mathcal{X}^i\subseteq{\mathcal{X}_0}$ for each game $\mc{G}_\N$ in $(\mc{G}_\N)_{\N=1}^\infty$. Moreover, $J^i(x^i,\sigma(x))$ is convex in $x^i\in\mathcal{X}^i$ for all fixed $x^{-i}\in\mathcal{X}^{-i}$, for all $i\in\{1,\dots,\N\}$. We let $R\coloneqq \max_{y\in\mc{X}_0}||y||$. 
\vspace*{-1mm}
\end{assumption}
 For a given a game $\mc{G}_\N$, we quantify the efficiency of equilibrium allocations using the notion of price of anarchy~\cite{koutsoupias1999worst} 
 \begin{equation}\label{eq:poa}
 \poa_\N \coloneqq \frac{\max_{x_N\in \textup{NE}_\N}J\SO(\sigma(x_N)) }{J\SO(\sigma(x\SO))}\,,
\end{equation}
 where $\textup{NE}_\N\subseteq \mc{X}$ is the set of Nash equilibria of $\mc{G}_\N$ and $x\SO$ is a social optimizer of $\mc{G}_\N$. The price of anarchy captures the ratio between the cost at the worst Nash equilibrium and the optimal cost; by definition $\poa_\N\ge1$. 
 \fra{In the next  subsections we study the behavior of $\poa_\N$, for three different classes of admissible price functions $p$ (of increasing generality).}
 \vspace*{-4mm}
 \subsection{Linear price function}
Throughout this subsection we consider linear  price functions $p$, as detailed in the following.
\vspace*{-1mm}
\begin{assumption}
\label{ass:lin}
The price function $p$ is linear, that is $p(z+d)=C(z+d)$, with $C=C^\top\in\R^{n\times n}$, $C\succ0$.
\vspace*{-1mm}
\end{assumption}
Note that Assumption \ref{ass:lin} implies  strong monotonicity of $z\mapsto p(z+d)$ and  strong convexity of $z\mapsto p(z+d)^\top(z+d)$, therefore Assumption \ref{ass:lin} is consistent with Assumption \ref{A1}. It is easy to verify that $J^i(x^i,\sigma(x))$ is convex in $x^i$, consistently with Assumption \ref{ass:sequence}. 
Nevertheless, $C$ is not required to be diagonal as it was instead in \cite{Gonz2015,deori2016nash}.
\vspace*{-1mm}
\begin{theorem}[$\poa_\N$ bound and convergence to 1]
\label{thm:lin}
~
\vspace*{-5mm}
\begin{itemize}
	\item[a)]{Under Assumption \ref{A1} and \ref{ass:lin}, for any game $\mc{G}_\N$ in the sequence, every Wardrop equilibrium $x\WE$ is a social optimizer i.e. $J\SO(\sigma(x\WE))\le J\SO(\sigma(x)),~\forall x\in \mc{X}$.
	} 
	\item[b)]{With the further Assumption \ref{ass:sequence}, for any fixed game $\mc{G}_\N$ in the sequence it holds that 
	\begin{equation}
	\textstyle
	J\SO(\sigma(x\SO))\le J\SO(\sigma(x\NE))\le J\SO(\sigma(x\SO))+c/{\sqrt{M}}\,,
	\label{eq:boundjlin}
	\end{equation}
	with $c=RL\SO\sqrt{2L_p\alpha^{-1}}$ constant, $x\SO$ social optimizer.\\
	{Thus, if there exists $\hat J\ge 0$ s.t. $J\SO(\sigma(x\SO))>\hat J$ for every game in the sequence $(\mc{G}_\N)_{\N=1}^\infty$, one has}
	{\[
	1\le \poa_\N\le 1+c/\bigl(\hat J\sqrt{\N}\bigl)~~~\text{and}~~
	\lim_{\N\to\infty}\poa_\N=1\,.\]}}
\end{itemize}
\end{theorem}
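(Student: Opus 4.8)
The plan is to characterize both equilibria and the social optimizer through their variational inequalities (VIs) and then exploit strong monotonicity \emph{at the level of the aggregate} $\sigma$, since that is where the problem is well behaved. For part a), note that with $p$ linear and symmetric $J\i(x\i,\sigma(x))=\inn{C(\sigma(x)+d)}{x\i}$, so the Wardrop condition \eqref{eq:def_WE} states that each $x\i\WE$ minimizes the linear map $x\i\mapsto\inn{C(\sigma(x\WE)+d)}{x\i}$ over $\mc X\i$; summing the per-player first-order conditions yields $\inn{C(\sigma(x\WE)+d)}{z-\sigma(x\WE)}\ge0$ for all $z\in\s$. On the other hand $\nabla_{x\i}J\SO(\sigma(x))=\tfrac2M C(\sigma(x)+d)$, so by convexity the optimality condition for the social optimizer is the \emph{same} VI $\inn{C(\sigma(x\SO)+d)}{z-\sigma(x\SO)}\ge0$ for all $z\in\s$. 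Because $z\mapsto C(z+d)$ is $\alpha$-strongly monotone and $\s$ is convex and compact, this VI has a unique solution, hence $\sigma(x\WE)=\sigma(x\SO)$; since $J\SO$ depends on $x$ only through $\sigma$, every Wardrop equilibrium attains the social optimum, which is a).

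For part b) the left inequality is immediate. For the right one I would first reduce social cost to aggregates: using a), $J\SO(\sigma(x\SO))=J\SO(\sigma(x\WE))$, and Lipschitz continuity of $J\SO$ gives $J\SO(\sigma(x\NE))-J\SO(\sigma(x\SO))\le L\SO\|\sigma(x\NE)-\sigma(x\WE)\|$. Everything then rests on bounding $\|\sigma(x\NE)-\sigma(x\WE)\|$ by $O(1/\sqrt M)$. Writing the Nash pseudo-gradient as $F\NE(x)=(C(\sigma(x)+d)+\tfrac1M Cx\i)_i$ and the Wardrop one as $F\WE(x)=(C(\sigma(x)+d))_i$, I would evaluate the Nash VI at $x\WE$ and the Wardrop VI at $x\NE$ and add them to obtain $\inn{F\WE(x\NE)-F\WE(x\WE)}{x\NE-x\WE}\le-\inn{F\NE(x\NE)-F\WE(x\NE)}{x\NE-x\WE}$.

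The left-hand side collapses to the aggregate: since the $i$-th block of $F\WE$ is independent of $i$, it equals $\inn{C(\sigma(x\NE)-\sigma(x\WE))}{\sum_i(x\i\NE-x\i\WE)}=M\inn{C(\sigma(x\NE)-\sigma(x\WE))}{\sigma(x\NE)-\sigma(x\WE)}\ge M\alpha\|\sigma(x\NE)-\sigma(x\WE)\|^2$, which is where aggregate strong monotonicity enters. The right-hand side is the perturbation $\tfrac1M(Cx\i\NE)_i$, whose norm is at most $L_pR/\sqrt M$ (using $\|x\i\NE\|\le R$), so multiplying by $\|x\NE-x\WE\|\le2R\sqrt M$ bounds it by $2L_pR^2$. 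Combining gives $M\alpha\|\sigma(x\NE)-\sigma(x\WE)\|^2\le2L_pR^2$, i.e. $\|\sigma(x\NE)-\sigma(x\WE)\|\le R\sqrt{2L_p\alpha^{-1}}/\sqrt M$, and feeding this into the Lipschitz estimate reproduces \eqref{eq:boundjlin} with $c=RL\SO\sqrt{2L_p\alpha^{-1}}$. As this bound is uniform over $\textup{NE}_\N$, it controls the worst-case numerator in \eqref{eq:poa}; dividing by $J\SO(\sigma(x\SO))>\hat J$ yields $\poa_\N\le1+c/(\hat J\sqrt M)$, and with $\poa_\N\ge1$ a squeeze gives $\lim_{\N\to\infty}\poa_\N=1$.

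The main obstacle is that neither pseudo-gradient is strongly monotone in the full profile $x$ — strong monotonicity holds only in the aggregate $\sigma$ — so off-the-shelf VI perturbation bounds in the decision variable do not apply and must be replaced by the summed-VI argument above. The delicate bookkeeping is making the powers of $M$ cancel: the aggregate monotonicity carries a factor $M\alpha$, the self-interaction term a factor $\tfrac1M$, and the diameter $\|x\NE-x\WE\|$ grows like $\sqrt M$, and these must combine to leave precisely the $1/\sqrt M$ rate with the stated constant. A secondary point to handle carefully is justifying the VI characterization of the Nash and Wardrop equilibria and the convexity and compactness of $\s$ needed for uniqueness in part a).
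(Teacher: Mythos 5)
Your proof is correct. For part a) it is essentially the paper's argument: you derive the aggregate variational inequalities satisfied by $\sigma(x\WE)$ and $\sigma(x\SO)$ (this is precisely the content of Lemma \ref{lemma:averageVI}) and note that symmetry of $C$ makes the two operators parallel, $F\SO(z)=2F\WE(z)$; the only cosmetic difference is that you conclude by uniqueness of the solution of the common strongly monotone VI, which tacitly requires a social optimizer to exist (harmless, and guaranteed under Assumption \ref{ass:sequence}), whereas the paper concludes by the sufficiency direction of Lemma \ref{lemma:averageVI} part 2, which needs no existence argument. The substantive difference is in part b): the paper does not prove the key estimate $\|\sigma(x\NE)-\sigma(x\WE)\|\le R\sqrt{2L_p\alpha^{-1}}/\sqrt{\N}$ at all, but imports it verbatim from \cite[Theorem 1]{Gentilearxiv17}, while you re-derive it from scratch: you add the Nash VI evaluated at $x\WE$ to the Wardrop VI evaluated at $x\NE$, use that the Wardrop operator has identical blocks to collapse its increment onto the aggregate (gaining the factor $\N\alpha$ from strong monotonicity of $z\mapsto C(z+d)$), and bound the self-interaction perturbation $\bigl(\tfrac{1}{\N}Cx\i\NE\bigr)_i$ by $L_pR/\sqrt{\N}$ and the diameter $\|x\NE-x\WE\|$ by $2R\sqrt{\N}$ via Cauchy--Schwarz, so that $\N\alpha\|\sigma(x\NE)-\sigma(x\WE)\|^2\le 2L_pR^2$. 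The powers of $\N$ cancel exactly as you claim and reproduce the paper's constant $c=RL\SO\sqrt{2L_p\alpha^{-1}}$. What your route buys is a self-contained proof that exposes the mechanism behind the $1/\sqrt{\N}$ rate (strong monotonicity holding only in the aggregate, against an $O(1/\N)$ per-player self-interaction term); what the paper's route buys is brevity, and a single external bound that is then reused identically in the proofs of Theorems \ref{thm:polypoa} and \ref{thm:routing}.
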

The proof is reported in the Appendix.
\vspace*{-1mm}
\begin{remark}
The previous theorem extends the results of \cite{ma2013decentralized,Gonz2015,deori2016nash,deconvergence}  simultaneously allowing for arbitrary convex constraints, finite populations, and non diagonal price function. Note that the condition $J\SO(\sigma(x\SO))>\hat J\ge0$ is merely technical and required to properly define $\poa_\N$. This condition is trivially satisfied in the applications when, e.g., every agent requests an amount of charge bounded away from zero. Even if the latter condition does not hold, the cost at any Nash equilibrium converges to the minimum cost as $\N\!\to\!\infty$, see~\eqref{eq:boundjlin}.
\end{remark}

%
\vspace*{-5mm}
\subsection{Non linear homogeneous price function}
In this section we consider $p(z+d)$ to be a nonlinear function, and assume its $t$-th component to depend only on the $t$-th component $z_t+d_t$, for all $t\in\{1,\dots,n\}$. This models, e.g., cases where the unit cost of electricity at every instant of time depends on the total consumption at that same instant.
\begin{assumption}
\label{ass:nonlin}
The price function $p$ takes the form 
\vspace*{-1mm}
\[
p(z+d)=
\begin{bmatrix}
f(z_1+d_1),\hdots,
f(z_n+d_n)
\end{bmatrix}^\top,
\vspace*{-1mm}
\]
with $f(y):\R_{>0}\rightarrow\R_{>0}$.
Further $\mc{X}\i\subseteq \R^n_{\ge0}$ and $d\in\R^n_{>0}$\,.
\end{assumption}
If $f(y)$ is not linear, a simple check shows that, in general, $\nabla_{x^j}(\nabla_{x^i} J^i(x^i,\sigma(x)))\neq\nabla_{x^i}(\nabla_{x^j} J^j(x^j,\sigma(x)))$ when $i\neq j$. Consequently,  the game is not potential, \cite[Theorem 1.3.1]{facchinei2007finite}. Hence methods to bound the $\poa$ based on the existence of an underlying potential function \cite{Gonz2015, deori2016nash}, can not be used here.
\begin{theorem}[$\poa_\N$ convergence and counterexample]
\label{thm:polypoa}
Suppose that Assumptions \ref{A1}, \ref{ass:sequence} and \ref{ass:nonlin} hold. Further assume that $J\SO(\sigma(x\SO))$ $>$$\hat J$ for some $\hat J \ge 0$, for every game in $(\mc{G}_\N)_{\N=1}^\infty$. 
\begin{itemize}
\item[a)] If $f(y)=\alpha y^k$ with $\alpha>0$ and $k>0$, it holds 
\[
	1\le \poa_\N\le 1+c/\bigl(\hat J\sqrt{\N}\bigl)~~~\text{and}~~
	\lim_{\N\to\infty}\poa_\N=1\,,
	\vspace*{-2mm}
	\]
	\vspace*{2mm}
	with $c = RL\SO\sqrt{2 L_p \alpha^{-1}}$ constant.
\item[b)] For $n\ge 2$, if $f(y)$ satisfies the assumptions, but does not take the form $\alpha y^k$ for some $\alpha>0$ and $k>0$, it is possible to construct a sequence of games $(\mc{G}_\N)_{\N=1}^\infty$ for which $\lim_{\N\to\infty}\poa_\N>1$.
\end{itemize}
\end{theorem}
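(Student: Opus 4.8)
\emph{Part a).} The plan is to reduce the monomial case to the linear case already settled in Theorem~\ref{thm:lin}. The crucial observation is that when $f(y)=\alpha y^k$ one has $J\SO(\sigma)=\sum_{t=1}^n \alpha(\sigma_t+d_t)^{k+1}$, so a direct differentiation yields the gradient identity $\nabla_\sigma J\SO(\sigma)=(k+1)\,p(\sigma+d)$. First I would recall (Lemma~\ref{lemma:averageVI}, obtained by summing the per-agent optimality conditions at fixed average) that the aggregate $\sigma(x\WE)$ of any Wardrop equilibrium solves the variational inequality $\inn{p(\sigma(x\WE)+d)}{\sigma-\sigma(x\WE)}\ge 0$ for all $\sigma\in\s$. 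Multiplying this inequality by $(k+1)>0$ and invoking the gradient identity turns it into $\inn{\nabla J\SO(\sigma(x\WE))}{\sigma-\sigma(x\WE)}\ge 0$ for all $\sigma\in\s$, which is exactly the first-order optimality condition for $\min_{\sigma\in\s}J\SO(\sigma)$. Since $J\SO$ is strongly convex (Assumption~\ref{A1}) and $\s$ is convex, $\sigma(x\WE)$ is therefore the social optimizer, i.e. the analogue of Theorem~\ref{thm:lin}a) holds. From here the estimate and the convergence $\poa_\N\to 1$ follow verbatim from the proof of Theorem~\ref{thm:lin}b), whose Nash-to-Wardrop argument uses only strong monotonicity of $p$, the Lipschitz constants $L_p,L\SO$, and the radius $R$, and hence is insensitive to linearity.

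\emph{Part b).} Since the Nash aggregate converges to the (unique) Wardrop aggregate as $\N\to\infty$ by the same estimate used in Theorem~\ref{thm:lin}b), it suffices to exhibit a single admissible price geometry for which the Wardrop aggregate is strictly suboptimal, and then take \emph{identical} agents in every game of the sequence so that both $\s$ and the Wardrop aggregate are independent of $\N$; continuity of $J\SO$ then forces $\lim_{\N\to\infty}\poa_\N=J\SO(\sigma(x\WE))/J\SO(\sigma(x\SO))>1$. Writing $m(y)\coloneqq f(y)+yf'(y)=\tfrac{d}{dy}[yf(y)]$, the computation of part a) shows that $\sigma(x\WE)$ solves the variational inequality for $p$ on $\s$ while $\sigma(x\SO)$ minimizes $J\SO$ on $\s$, and the two coincide for \emph{every} geometry precisely when $m$ is a positive multiple of $f$, i.e. when the elasticity $yf'(y)/f(y)$ is constant --- which holds if and only if $f(y)=\alpha y^k$. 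The failure of the monomial form is exactly what I would exploit.

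It suffices to treat $n=2$ (for larger $n$ I would freeze the remaining coordinates to singletons). As $f$ is not a monomial, $m(y)/f(y)$ is non-constant, so I can pick load levels $y_1^\star\neq y_2^\star$ in $\R_{>0}$ with $m(y_1^\star)/f(y_1^\star)\neq m(y_2^\star)/f(y_2^\star)$. I then set every agent's constraint to the weighted simplex $\mc{X}\i=\{x\in\R^2_{\ge0}\mid a_1 x_1+a_2 x_2=b\}$ with $a_t\coloneqq f(y_t^\star)$, choose $d\in\R^2_{>0}$ with $d_t<y_t^\star$, and put $b\coloneqq a^\top(y^\star-d)$, so that $\sigma^\star\coloneqq y^\star-d$ lies in the relative interior of $\s=\mc{X}\i$. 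At $\sigma^\star$ the price vector is $p(\sigma^\star+d)=(f(y_1^\star),f(y_2^\star))=a$, hence $\inn{p(\sigma^\star+d)}{\sigma-\sigma^\star}=a^\top\sigma-b=0$ for all $\sigma\in\s$; thus $\sigma^\star$ solves the variational inequality and, by strong monotonicity of $p$, is the unique Wardrop aggregate. Social optimality of $\sigma^\star$ would instead require $\nabla J\SO(\sigma^\star)=(m(y_1^\star),m(y_2^\star))$ to be parallel to the normal $a=(f(y_1^\star),f(y_2^\star))$, i.e. $m(y_1^\star)/f(y_1^\star)=m(y_2^\star)/f(y_2^\star)$, which is violated by construction. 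Consequently $\sigma(x\SO)\neq\sigma(x\WE)$, and strong convexity of $J\SO$ gives $J\SO(\sigma(x\WE))>J\SO(\sigma(x\SO))$ strictly, yielding the claimed asymptotic $\poa>1$.

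The clean part is a): the gradient identity collapses everything onto Theorem~\ref{thm:lin}. The main obstacle is b): one must choose the constraint geometry and the offset $d$ so that the Wardrop point is simultaneously a genuine \emph{interior} mixed equilibrium (so the price-proportionality characterization applies), \emph{feasible} ($y^\star>0$, $0<d_t<y_t^\star$, $\sigma^\star\ge 0$), and the \emph{unique} aggregate, while still guaranteeing that the $\N$-independent gap $J\SO(\sigma(x\WE))-J\SO(\sigma(x\SO))$ is strictly positive. Verifying that a non-constant elasticity really does furnish load levels $y_1^\star\neq y_2^\star$ with distinct $m/f$ values realizable within Assumptions~\ref{A1}--\ref{ass:nonlin} --- which is where the routing-game perspective of~\cite{roughgarden2003price,correa2004selfish} makes the gap quantifiable --- is the delicate bookkeeping.
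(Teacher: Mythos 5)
Your part a) is, modulo notation, exactly the paper's own argument: the gradient identity $\nabla_\sigma J\SO(\sigma)=(k+1)\,p(\sigma+d)$ is precisely the paper's observation that the two variational-inequality operators of Lemma \ref{lemma:averageVI} satisfy $F\SO(z)=(k+1)F\WE(z)$, so the Wardrop aggregate solves $\textup{VI}(\s,F\SO)$ as well and every Wardrop equilibrium is a social optimizer; the quantitative Nash-to-Wardrop step is then reused verbatim from Theorem \ref{thm:lin}b), as you say.

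Part b) is correct but follows a genuinely different construction. The paper first proves (Lemma \ref{lemma:notaligned}) that a non-monomial $f$ admits a point $\bar z\in\R^n_{>0}$ where $F\WE(\bar z)$ and $F\SO(\bar z)$ are not parallel, and then takes every $\mc{X}\i$ equal to the two-dimensional set $\{\bar z+\alpha v_1+\beta v_2:\ \alpha,\beta\in[0,1]\}\cap\R^n_{\ge0}$, with $v_1=F\WE(\bar z)$ and $v_2$ a positive multiple of the component of $-F\SO(\bar z)$ orthogonal to $F\WE(\bar z)$: the corner $\bar z$ then solves $\textup{VI}(\s,F\WE)$, while moving along $v_2$ strictly decreases the $F\SO$-linearization, so $\bar z$ fails $\textup{VI}(\s,F\SO)$. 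You instead recast non-alignment as non-constancy of the elasticity $yf'(y)/f(y)$, pick two load levels where it differs, and use a weighted simplex whose normal equals the price vector at the interior point $\sigma^\star$, so that the Wardrop VI holds with equality along all feasible directions while $\nabla J\SO(\sigma^\star)$ is not parallel to the normal. Both routes rest on the same underlying fact; yours buys a more transparent interior first-order-condition argument and an explicit bridge to classical routing counterexamples, while the paper's orthogonal-decomposition set works at an arbitrary non-alignment point without needing to engineer interiority, and isolates the analytic content in a reusable lemma.

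One small hole you should patch: the theorem's hypothesis is that $f$ is not of the form $\alpha y^k$ with $\alpha>0$ \emph{and} $k>0$, and this alone does not imply non-constant elasticity. Monomials with exponent $k\le 0$ (e.g., constant $f$) also have constant elasticity, yet are not excluded by your premise as written, so your selection of $y_1^\star\neq y_2^\star$ with distinct values of $m/f$ is not yet justified. The paper closes exactly this case inside Lemma \ref{lemma:notaligned}: solving $f'(y)y=(b-1)f(y)$ gives $f(y)=ay^{b-1}$, and the cases $a\le0$ or $b\le1$ are discarded because they violate Assumption \ref{A1} (a non-increasing price cannot yield a strongly monotone $z\mapsto p(z+d)$ on the non-singleton sets you construct). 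Adding that one sentence makes your elasticity dichotomy valid under the stated hypotheses, after which the rest of your argument goes through.
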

The proof is reported in the Appendix. Therein, the counterexample relative to b) is constructed using $\mc{X}\i=\bar{\mc{X}}$. In other words our  impossibility result holds also for the case of 
homogeneous  populations. This is not in contrast with the result in \cite{ma2013decentralized} or \cite{deconvergence}, because therein the sets $\bar{\mc{X}}$ were assumed to be simplexes with upper bounds constraints. Here we claim that there exists a convex set $\bar{\mc{X}}$ (not a simplex with upper bounds) such that $\poa_\N$ does not converge to $1$. 
\begin{remark}
	The previous theorem is of fundamental importance from the standpoint of the system operator, in that it suggests the use of monomial price functions to guarantee the highest achievable efficiency (all Nash equilibria become social optimizers for large $\N$). If different price functions are chosen, it is always possible to construct a problem instance such that the worst Nash equilibrium is \emph{not} a social optimizer.
\end{remark}
\vspace*{-4mm}
\subsection{Nonlinear heterogeneous price function}
\fra{
In the previous subsection we showed that if the price function is not a monomial, then  $\poa_\N$ may not converge to one. In this section we derive upper bounds for $\poa_\N$ when the price function belongs to a general class of functions and may  be different at different time instants, as formalized next.
 \begin{assumption}
\label{ass:nonlin2}
The price function $p$ takes the form 
\[
p(z+d)=
\begin{bmatrix}
l_1(z_1+d_1),
\hdots,
l_n(z_n+d_n)
\end{bmatrix}^\top,
\]
where $l_t(y):\R_{\ge 0}\rightarrow\R_{\ge0}$, $ l_t\in \mathcal{L}$ for all $t$ and $\mathcal{L}$ is a given class of continuous and nondecreasing price functions.  Further let $\mc{X}\i\subseteq \R^n_{\ge0}$ be non empty, closed and convex.
\end{assumption}
Note that Assumption  \ref{ass:nonlin2} is \emph{less restrictive} than Assumption  \ref{ass:nonlin} as we let the price $l_t$ depend on the time instant $t$. The key idea in this case is to show that  standard results derived in  \cite{roughgarden2003price}, \cite{correa2004selfish} for Wardrop equilibria in routing games can be applied to charging games too. The resulting bounds on $\poa_\N$ can then be derived using the converging result in \cite{Gentilearxiv17}.
 More formally, given a charging game $\mc{G}_{\N}$, we consider an equivalent nonatomic routing game over a parallel network with as many links as charging intervals. To present our next result we introduce the following quantity from \cite[Eq 3.8]{correa2004selfish}
$$\beta(\mathcal{L}):=\sup_{l\in\mathcal{L}}  \sup_{v\ge 0} \left( \frac{1}{vl(v)}\max_{w\ge 0} [ (l(v)-l(w))w] \right).$$
 It follows from \cite{correa2004selfish} that  $\beta(\mathcal{L})\le 1$ and   $[ 1- \beta(\mathcal{L}) ]^{-1}=\alpha(\mathcal{L})$, where  $\alpha(\mathcal{L})$ is the anarchy value for class $\mathcal{L}$ as defined in \cite{roughgarden2003price}.
Therein (see Table 1),  $\alpha(\mathcal{L})$ is computed for classes of functions such as affine, quadratic, polynomials. The key idea of the following theorem is to show that the games considered here are $(1,\beta(\mathcal{L}))$-smooth, as defined in \cite{roughgarden2009intrinsic}.
\begin{theorem}[$\poa_\N$ for heterogeneous price function]\label{thm:routing}
a) Suppose that Assumption \ref{ass:nonlin2} holds. Then for any fixed game $\mathcal{G}_M$ and any Wardrop equilibrium $x_W$ it holds
\begin{equation}\label{eq:stepThm3}
J_S(\sigma(x_W))\le J_S(\sigma(x_S))\alpha(\mathcal{L})
\end{equation}
b) Further suppose Assumptions \ref{A1}, \ref{ass:sequence} hold, and there exists $\hat J\ge 0$ s.t. $J\SO(\sigma(x\SO))>\hat J$ for every game in $(\mc{G}_\N)_{\N=1}^\infty$. 
  Then, for any game $\mathcal{G}_M$ in the sequence
\[
J_S(\sigma(x_S))\le J_S(\sigma(x_N))\le J_S(\sigma(x_S))\alpha(\mathcal{L})+c/\sqrt{M},
\] 
and $1\le\poa_M\le \alpha(\mathcal{L})+c/\bigl(\hat J\sqrt{\N}\bigl),$ thus implying 
$\lim_{M\rightarrow \infty} \poa_M\le \alpha(\mathcal{L}),$
with $c = RL_s\sqrt{2L_p\alpha^{-1}}$.
\end{theorem}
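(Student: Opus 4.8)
The plan is to prove a) by recognizing the charging game as a parallel-network routing game and invoking the $(1,\beta(\mathcal{L}))$-smoothness machinery, then to lift the Wardrop bound to the Nash bound in b) via the $O(1/\sqrt{M})$ convergence of the aggregate. For a), I would start from the variational characterization of the Wardrop equilibrium: by Definition \ref{def:WE}, every agent $i$ minimizes the \emph{linear} map $x^i\mapsto p(\sigma(x_W)+d)^\top x^i$ over $\mathcal{X}^i$, so evaluating at the social optimizer's action $x_S^i$, summing over $i$ and dividing by $M$ gives $p(\sigma(x_W)+d)^\top\sigma(x_W)\le p(\sigma(x_W)+d)^\top\sigma(x_S)$. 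Writing $v_t:=\sigma_t(x_W)+d_t$ and $w_t:=\sigma_t(x_S)+d_t$ and adding $p(\sigma(x_W)+d)^\top d$ to both sides, this becomes
\[
J_S(\sigma(x_W))=\sum_{t=1}^n l_t(v_t)\,v_t\;\le\;\sum_{t=1}^n l_t(v_t)\,w_t .
\]
I would then use the pointwise inequality encoded by $\beta(\mathcal{L})$: by its definition in \cite{correa2004selfish}, for every $l\in\mathcal{L}$ and all $v,w\ge 0$ one has $l(v)\,w\le l(w)\,w+\beta(\mathcal{L})\,v\,l(v)$. Applying this with $l=l_t$, $v=v_t$, $w=w_t$ and summing over $t$ yields $\sum_t l_t(v_t)w_t\le J_S(\sigma(x_S))+\beta(\mathcal{L})\,J_S(\sigma(x_W))$. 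Chaining the two bounds and rearranging gives $[1-\beta(\mathcal{L})]\,J_S(\sigma(x_W))\le J_S(\sigma(x_S))$, i.e.\ exactly \eqref{eq:stepThm3} since $\alpha(\mathcal{L})=[1-\beta(\mathcal{L})]^{-1}$; this is precisely the $(1,\beta(\mathcal{L}))$-smoothness claimed in the lead-up, in the sense of \cite{roughgarden2009intrinsic}.

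For b), the left inequality $J_S(\sigma(x_S))\le J_S(\sigma(x_N))$ is immediate from optimality of $x_S$, so the work is in the right inequality, which I would obtain by bounding the Nash--Wardrop aggregate gap and reusing a). Under Assumptions \ref{A1} and \ref{ass:sequence}, the strong-monotonicity argument of Lemma \ref{lemma:averageVI} (the same one yielding \eqref{eq:boundjlin} in Theorem \ref{thm:lin}) gives $\|\sigma(x_N)-\sigma(x_W)\|\le R\sqrt{2L_p\alpha^{-1}}/\sqrt{M}$: one stacks the Nash and Wardrop variational inequalities, uses $\langle p(\sigma(x)+d)-p(\sigma(y)+d),\,\sigma(x)-\sigma(y)\rangle\ge\alpha\|\sigma(x)-\sigma(y)\|^2$ from Assumption \ref{A1}, and bounds the Nash--Wardrop operator discrepancy (the extra $\tfrac1M\nabla p\,x^i$ terms) by $L_pR/\sqrt{M}$ in the stacked norm. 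Since $J_S$ is $L_S$-Lipschitz on the compact set $\Sigma$, this yields $J_S(\sigma(x_N))\le J_S(\sigma(x_W))+c/\sqrt{M}$ with $c=RL_S\sqrt{2L_p\alpha^{-1}}$, and combining with a) gives $J_S(\sigma(x_N))\le\alpha(\mathcal{L})\,J_S(\sigma(x_S))+c/\sqrt{M}$. Dividing by $J_S(\sigma(x_S))>\hat J$ and recalling \eqref{eq:poa} produces $\poa_M\le\alpha(\mathcal{L})+c/(\hat J\sqrt{M})$; since the gap bound is uniform over all Nash equilibria it applies to the worst one, and letting $M\to\infty$ gives $\lim_{M\to\infty}\poa_M\le\alpha(\mathcal{L})$.

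The crux is part a): reconciling the charging game with the routing-game smoothness framework despite the fixed inflexible demand $d$, which no player controls. I expect the resolution to be the algebraic observation that $d$ can be absorbed into the effective link loads $v_t,w_t$, so that the social cost assumes the pure routing form $\sum_t l_t(\cdot)(\cdot)$ and the inequality of \cite{correa2004selfish} applies verbatim with $v=v_t$, $w=w_t$. The only additional care needed in b) is that the $O(1/\sqrt{M})$ aggregate gap must hold \emph{uniformly} over the possibly non-unique set of Nash equilibria; this is guaranteed because the bound depends solely on the strong-monotonicity constant $\alpha$ of $p$ and on the uniform radius $R$, and not on the particular equilibrium selected.
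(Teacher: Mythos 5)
Your proof is correct and follows essentially the same route as the paper: part a) is exactly the paper's smoothness argument --- the Wardrop variational inequality evaluated at the social optimizer, combined with the pointwise inequality $(l(v)-l(w))w\le\beta(\mathcal{L})\,v\,l(v)$ from \cite{correa2004selfish}, with the inflexible demand $d$ absorbed into the loads $v_t,w_t$ --- and part b) is the same lift used for Theorem \ref{thm:lin}b), namely the $O(1/\sqrt{M})$ Nash--Wardrop aggregate convergence of \cite{Gentilearxiv17} plus Lipschitz continuity of $J\SO$ and division by $\hat J$. The only cosmetic difference is that you obtain the Wardrop inequality directly from the agents' best responses evaluated at $x\SO$, whereas the paper first states the variational inequality over all of $\Sigma$ and then specializes to $\sigma\SO$.
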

\vspace*{-3mm}
\begin{remark}
If $\mathcal{L}$ contains constant functions, then  \eqref{eq:stepThm3} is tight (see \cite{roughgarden2003price} and  the simulation section). 
This is not a contradiction of Theorems \ref{thm:lin}, \ref{thm:polypoa} because therein either constant functions are not allowed or the price function is assumed to be time independent.   Theorems \ref{thm:lin}, \ref{thm:polypoa} can be seen as refinements of Theorem~\ref{thm:routing} and guarantee that $\lim_{M\rightarrow \infty} \poa_\N=1$ by restricting the admissible class of price functions.
\end{remark}
}
\vspace*{-4mm}
\section{Application to charging of electric vehicles}
\label{sec:ATP}
We consider a population of $\N$  electric vehicles, where the level of charge of vehicle $i$ at time $t$ is described by $s\i_t$. 
Its evolution is specified by the discrete-time system
$
s\i_{t+1} = s\i_t + b\i x\i_t \,,  t = 1, \dots, n$,
where $x\i_t$ is the charging control and $b\i > 0$ is the charging efficiency.
We assume that $x\i_t$ is non-negative, that it cannot exceed $\tilde x^i_t \ge 0$ at time $t$ and that the absolute value of the difference between $x\i_{t}$ and $x\i_{t+1}$ is bounded by $r_i$.
The final level of charge is constrained to $s_{n+1}^i\ge\eta\i$, where $\eta\i \ge 0$ is the desired level of charge of agent $i$.
Denoting $x\i =[x\i_1, \dots, x^i_n]^\top \in \R^n$, the constraints of agent $i$ reduce to%
\vspace*{-1mm}
\begin{equation}
\label{eq:vehicle_constraint}
x\i \!\!\in\! \mc{X}\i \!=\!\! \left\{\!x\i\! \!\in\! \mathbb{R}^n \! \left|\!\!
\!\begin{array}{l}
0 \le x\i_t \le \tilde x\i_t, ~~~~~ \forall \, t=1,\dots,n \\ 
\sum_{t=1}^{n} x\i_t \ge \theta\i\\
|x\i_{t+1}-x\i_{t}|\le r\i,   \forall \, t=1,\dots,n\!-\!1
\end{array}
\!\!\!\!\right.
\right\}
\vspace*{-1mm}
\end{equation}
where $\theta\i \coloneqq {(b\i)}^{-1} (\eta\i - s\i_1)$, with
$s\i_1 \ge 0$ the level of charge for $t=1$. Note that the vehicles are \textit{heterogeneous} in the \textit{total amount of energy} required $\theta^i$ as well as  the \textit{time-varying upper bounds} $\tilde x^i_t$ (that  can be used to model deadlines, availability for charging), and the {\it ramping constraints $r^i$}. Such constraints satisfy Assumption \ref{A1}. Further, we assume that there exists $\hat \eta>0$ such that for each $M$ and $i\in\{1,\dots,\N\}$, $\eta^i\le \hat \eta$ so that $\mathcal{X}_0$  is compact as required in Assumption \ref{ass:sequence}. Note that this is without loss of generality in any practical scenario. The cost function of each vehicle reads as 
\vspace*{-1mm}
\begin{equation}
\textstyle J\i(x\i,\sigma(x))\!=\!\sum_{t=1}^n p_t \left( \frac{\sigma_t(x)+d_t }{\kappa_t}  \right) x\i_t \!= \!p(\sigma(x)+d)^\top x^i,
\label{eq:PEV_energy_bill}
\vspace*{-1mm}
\end{equation}
where we assumed that the energy price for each time interval $p_t:\R_{\ge0}\rightarrow \R_{>0}$ depends on the ratio between total consumption and total capacity $(\sigma_t(x)+d_t)/ \kappa_t$, where $d_t$ and $\sigma_t(x):=\frac{1}{\N}\sum_{i=1}^\N x^i_t$ are the non-EV and EV demand at time $t$ divided by $\N$
and $\kappa_t$ is the total production capacity divided by $\N$ as in~\cite[eq. (6)]{ma2013decentralized}.
To sum up, we define the  game $\mc{G}^\text{EV}_\N$ as in~\eqref{eq:gameG}, with $\mathcal{X}\i$ and $J\i(x\i,\sigma(x))$ as in \eqref{eq:vehicle_constraint} and \eqref{eq:PEV_energy_bill} respectively.
Let $x:=[x^1;\ldots; x^\N]$ be the vector of charging schedules for the whole population. The social cost of the game is $
 J_S(\sigma(x))$$=$$\textstyle \sum_{t=1}^n p_t \left( \frac{\sigma_t(x)+d_t}{\kappa_t}  \right) (\sigma_t(x)+d_t) $$=$$ p(\sigma(x)+d)^\top (\sigma(x)+d)$,
that is, the overall electricity bill for the sum of non-EV and EV demand; $n=24$. For the numerical study, we consider four cases as described next.

\emph{Case $1$.} We set $p_t(y)=0.15y^3$ and choose $\tilde x^i_t$ to allow charging in $[t^i_{\textup{min}},t^i_{\textup{max}}]$, with $t^i_{\textup{min}},t^i_{\textup{max}}$ uniformly randomly
distributed between 5pm and 10am; 
$\theta^i\sim\mathcal{U}[5, 15]$, $r^i\sim\mathcal{U}[1,7]$ and $d_t$ as in \cite[Figure 1]{ma2013decentralized}.

\dar{
\emph{Cases $2$-$4$.} We set $p_t(y)=0.15$ from 5pm to 1am and $p_t(y)=0.15y$ from 2am to 10am. For all vehicles, we choose $\tilde x^i_t$ to allow charging from 5pm to 10am. There are no ramping constraints. Cases 2-4 differ in $\theta^i$, $d_t$ as in the following table.}
\vspace*{-5mm}
\begin{table}[h!]
\centering
\dar{
\begin{tabular}{|c|c|c|}
\hline
 Case& $\theta^i$ & $d_t$ \\ \hline
2 & $9$ & $\mathbb{0}_n$\\
3 &$9$ & as in \cite[Figure 1]{ma2013decentralized} \\
4 &$\mathcal{U}[5,13]$ & $\mathbb{0}_n$ \\ \hline
\end{tabular}
}
\end{table}%

\noindent For each case, we report the (numerical) price of anarchy as a function of $\N$ in Figure \ref{fig:poa_and_diff} (top).
Observe that case $1$ and $4$ feature heterogenous charging needs. For these cases, we have randomly extracted $100$ games $\mc{G}^{\text{EV}}_\N$ (for any fixed $\N$) and report the worst $\poa$ amongst the $100$ realization.
In order to plot the price of anarchy, we computed the ratio between \emph{one} (instead of the \emph{worst}) Nash equilibrium of $\mc{G}^\text{EV}_\N$ and the social optimum. This choice is imposed by the fact that computing all Nash equilibria of $\mc{G}^\text{EV}_\N$ is in general a hard problem.\footnote{To compute a Nash equilibrium we applied the extragradient  algorithm \cite{facchinei2007finite}, which is not guaranteed to converge for small $\N$ as the operator associated with the variational inequality of the Nash problem is not guaranteed to be strongly monotone \cite{Gentilearxiv17}. We thus verified a posteriori that the point where the algorithm stopped was a Nash equilibrium.}
\dar{In Figure \ref{fig:poa_and_diff} (bottom) we plot the difference between the cost at the Nash and at the social optimizer, relative to case~1.}
\vspace*{-2mm}
\newlength\figureheight 
\newlength\figurewidth 
\setlength\figureheight{2.8cm} 
\setlength\figurewidth{0.8\linewidth} 
\begin{figure}[h!]
     \begin{subfigure}[b]{\linewidth}
          \centering
          \resizebox{1\linewidth}{!}{
%
%
\begin{tikzpicture}

\begin{axis}[%
width=\figurewidth,
height=\figureheight,
at={(1.011111in,0.641667in)},
scale only axis,
xmin=0,
xmax=153,
xmajorgrids,
xtick={3, 10, 20, 30, 40 , 50, 60, 70, 80, 90, 100, 120, 150},
ymin=0.99,
ymax=1.35,
ytick={1, 1.05, 1.10, 1.15, 1.20, 1.25, 1.30, 1.35},
ymajorgrids,
tick label style={font=\small},
ylabel={$\poa_\N$},
legend style={at={(0.98,0.87)},anchor=north east, row sep=-3pt},
]

\addplot [color=black,solid,mark=o,mark options={solid},line width = 0.8pt]
  table[row sep=crcr]{%
3	1.146440320202752\\
5	1.084144206845967\\
7	1.060048849309918\\
10	1.024232150863788\\
15	1.016372686491602\\
20	1.014257536279855\\
30	1.004045564995405\\
40	1.002948190905122\\
50	1.003581851865993\\
60  1.002083234902703\\
70	1.003119089620615\\
80  1.001584796583014\\
90	1.001117731557795\\
100	1.001180347871256\\
120 1.000782030475284\\
150 1.000777440384723\\ 
};
\addlegendentry{\footnotesize Case 1}

\addplot [color=red,solid,mark=triangle,mark options={solid},line width = 0.8pt]
  table[row sep=crcr]{%
3	1.083333271041991\\
5	1.148147949310946\\
7	1.187499715409593\\
10	1.223140306615917\\
15	1.255208121274737\\
20	1.272864593882444\\
30	1.291709937733426\\
40	1.301606129192305\\
50	1.307702065977234\\
60 	1.311833603676443\\
70	1.314818263630529\\
80  1.317075457595903\\
90  1.318841734023247\\
100	1.320261976515319\\
120 1.322404157939918\\
150 1.324560203528079\\ 
};
\addlegendentry{\footnotesize Case 2}

\addplot [color=blue,solid,mark=diamond,mark options={solid},line width = 0.8pt]
  table[row sep=crcr]{
3	1.051328159113605\\
5	1.059896867679024\\
7	1.064396255666115\\
10	1.068183912171345\\
15	1.071412416249527\\
20	1.073131774140623\\
30	1.074924501049531\\
40	1.075850149130869\\
50	1.076415196445974\\
60  1.076795962142782\\
70	1.077069926099951\\
80  1.077268560416244\\
90  1.077437745707750\\
100	1.077567147972555\\
120 1.077749962830918\\
150 1.077942306748157\\ 
};
\addlegendentry{\footnotesize Case 3}

\addplot [color=green,solid,mark=square,mark options={solid},line width = 0.8pt]
  table[row sep=crcr]{
3	1.116195356881952\\
5	1.176091225026618\\
7   1.207709846007565 \\
10	1.235579439042558\\
15	1.260437393215178\\
20	1.277438692108011\\
30	1.295324853591825\\
40	1.303891968957424\\
50	1.309541572633434\\
60  1.312817040389996\\
70	1.315639950431473\\
80  1.317885457093774\\
90  1.319246690488960\\
100	1.320741186315272\\
120 1.322749824875607\\
150 1.324731164796894\\
};
\addlegendentry{\footnotesize Case 4}

\end{axis}
\end{tikzpicture}
     \end{subfigure}
    \\[0.1cm]
     \begin{subfigure}[b]{\linewidth}
          \centering
          \resizebox{1\linewidth}{!}{\input{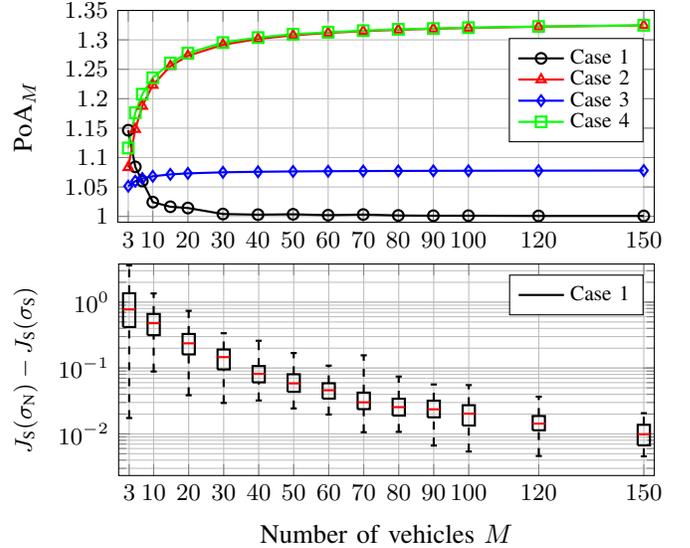}}
     \end{subfigure}
     \vspace*{-2mm}
     \caption{Price of anarchy (top), and \dar{cost difference between Nash and social optimum (bottom)} as a function of $\N$.}
     \label{fig:poa_and_diff}
     \vspace*{-5mm}
\end{figure}

Thanks to the choice of parameters and price function, case 1 meets the  Assumptions \ref{A1}, \ref{ass:sequence} and \ref{ass:nonlin}  (see Lemma \ref{lem:ass} in the Appendix). Thus, Theorem \ref{thm:polypoa}b) guarantees that $\lim_{M\rightarrow \infty} \poa_M=1$. The numerical results reported in Figure~\ref{fig:poa_and_diff} (top, black line) are consistent with it: the ratio between the cost at the Nash and the cost at the social optimum converges to one. \dar{In addition to this, Figure~\ref{fig:poa_and_diff} (bottom) shows that also the difference between these costs converges to zero, as guaranteed by the proof of theorem \ref{thm:polypoa}a) and the boundedness of $\mc{X}_0$.
A typical plot describing the valley filling property of the equilibrium in case 1 can be found e.g., in \cite[Figure 2]{ma2013decentralized}.
Case 2 has been constructed so that the corresponding Wardrop equilibrium features the worst possible asymptotic price of anarchy within the class of affine cost functions (for which $\alpha(\mathcal{L})=4/3$, see \cite{roughgarden2003price}).
The numerics of Figure \ref{fig:poa_and_diff} (top, red line) show that $\poa_\N$ converges to $1.33\approx 4/3=\alpha(\mathcal{L})$. Cases 3 and 4 are a modification of case 2. 
While the presence of base demand (case 3) helps in lowering the price of anarchy, the impact of heterogeneity (case 4) on the asymptotic price of anarchy is minor (blue and green plots in Figure~\ref{fig:poa_and_diff}).}

\vspace*{-3mm}
\section{Conclusions} 
\vspace*{-1mm}
We considered the problem of charging a fleet of heterogeneous electric vehicles as formulated using game theoretic tools. More precisely, we studied the efficiency of the resulting equilibrium allocations, measured by the concept of price of anarchy.
We showed that the price of anarchy converges to one as the population of vehicles grow
if the price function is linear (but possibly dependent on all the time instants), or if the price function depends only on the instantaneous demand and is a positive pure monomial. \dar{We provided efficiency bounds for general non linear functions.} For these three cases, we also provided bounds on the $\poa$ as a function of the population size. Our theoretical findings are corroborated by means of numerical simulations.
\dar{We conclude noting that the question regarding the efficiency of equilibria in aggregative games is of interest for a broader class of cost functions than those studied here (e.g., quasi convex costs). We leave this as a future work.}
%
\vspace*{-5mm}
\section*{Appendix A: Characterization of the average} 
 \vspace*{-1mm}
\noindent 
This section characterizes the average players' action $\sigma(x)$ at the Wardrop equilibrium and at the social optimizer of $\mathcal{G}$. 
%
\begin{definition}[Variational inequality~\cite{facchinei2007finite}]
\label{def:vi}
Given $\mathcal{K}\subseteq \mathbb{R}^\ell$ and $F:\mathcal{K}\rightarrow \mathbb{R}^\ell$. A point $\bar x\in\mathcal{K}$ is a solution of the variational inequality $\textup{VI}(\mathcal{K},F)$ if $\,\forall x\in\mathcal{K}$, $F(\bar x)^\top (x-\bar x)\ge 0.$
\end{definition} 
 \begin{lemma}[Equivalent characterizations]
 \label{lemma:averageVI}
 \begin{enumerate}[leftmargin=*]
 \item[]
 \item[] \hspace*{-5.8mm} Suppose  Assumption \ref{A1} holds.
 \item
 Given $x\WE$ a Wardrop equilibrium, its average $\sigma(x\WE)$ solves $\textup{VI}(\s,F\WE)$, with $F\WE:\mathbb{R}^n\rightarrow\mathbb{R}^n$, 
$F\WE(z) \coloneqq p(z+d)$.
The  $\textup{VI}(\s,F\WE)$ admits a unique solution $\sigma\WE$. Let us define $\mc{X}\WE\coloneqq\{x\in\mc{X}~\text{s.t.}~\frac{1}{\N}\sum_{j=1}^\N x\j=\sigma\WE\}$. Then any vector of strategies $x\WE\in\mc{X}\WE$ is a Wardrop equilibrium.
\item
Given $x\SO$ a social optimizer, its average $\sigma(x\SO)$ solves $\textup{VI}(\s,F\SO)$, with $F\SO:\mathbb{R}^n\rightarrow\mathbb{R}^n$, 
$F\SO(z)\coloneqq p(z+d)+[\nabla_z p(z+d)](z+d).$
The $\textup{VI}(\s,F\SO)$ admits a unique solution $\sigma\SO$. Define $\mc{X}\SO\coloneqq\{x\in\mc{X}~\text{s.t.}~\frac{1}{\N}\sum_{j=1}^\N x\j=\sigma\SO\}$. Then any vector of strategies $x\SO\in\mc{X}\SO$ is a social optimizer.
\end{enumerate}
 \end{lemma}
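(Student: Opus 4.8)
The plan is to prove both statements by the same three-step template: first translate the equilibrium (resp.\ optimality) condition into a variational inequality on the \emph{average} by exploiting the aggregative structure, so that the relevant feasible set becomes $\s$; then deduce uniqueness of the VI solution from the strong monotonicity guaranteed by Assumption~\ref{A1}; and finally establish the converse inclusion using the separable (Minkowski-average) structure of $\s$. Throughout I will use that $\s$ is convex, being an average of convex sets, and that existence of a VI solution is inherited from the given equilibrium or optimizer.

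For part 1, observe that since $J\i(x\i,\sigma(x))=p(\sigma(x)+d)^\top x\i$ is linear in $x\i$, the Wardrop condition \eqref{eq:def_WE} is equivalent to the first-order optimality condition $p(\sigma(x\WE)+d)^\top(x\i-x\i\WE)\ge 0$ for all $x\i\in\mc{X}\i$ and all $i$. Given an arbitrary $z\in\s$, write $z=\frac1\N\sum_j \tilde x\j$ with $\tilde x\j\in\mc{X}\j$; summing the per-agent inequalities (with $x\i=\tilde x\i$) over $i$ and dividing by $\N$ yields $p(\sigma(x\WE)+d)^\top(z-\sigma(x\WE))\ge 0$, i.e.\ $\sigma(x\WE)$ solves $\textup{VI}(\s,F\WE)$. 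Since $F\WE(z)=p(z+d)$ is strongly monotone, any two solutions must coincide, giving uniqueness of $\sigma\WE$. For the converse, take any $x\in\mc{X}\WE$, fix an agent $i$ and a deviation $\hat x\i\in\mc{X}\i$, and apply the VI inequality at the test point $z=\frac1\N\bigl(\hat x\i+\sum_{j\neq i} x\j\bigr)\in\s$: the contributions for $j\neq i$ cancel in $z-\sigma\WE$ and one recovers $p(\sigma\WE+d)^\top(\hat x\i-x\i)\ge 0$, which is exactly the Wardrop condition for agent $i$. Hence every $x\in\mc{X}\WE$ is a Wardrop equilibrium.

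For part 2, the key observation is that $J\SO(\sigma(x))=p(\sigma(x)+d)^\top(\sigma(x)+d)$ depends on $x$ only through $\sigma(x)$, and that the set of attainable averages is precisely $\s$. Thus computing a social optimizer reduces to minimizing $g(z)\coloneqq p(z+d)^\top(z+d)$ over the convex set $\s$. By Assumption~\ref{A1}, $g$ is strongly convex, so it admits a unique minimizer $\sigma\SO$ over $\s$, whose first-order optimality condition reads $\nabla_z g(\sigma\SO)^\top(z-\sigma\SO)\ge 0$ for all $z\in\s$; a direct product-rule computation (using the paper's convention for $\nabla$) gives $\nabla_z g(z)=p(z+d)+[\nabla_z p(z+d)](z+d)=F\SO(z)$, so this condition is exactly $\textup{VI}(\s,F\SO)$, with uniqueness again following from strong convexity. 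The converse is immediate: if $\sigma(x)=\sigma\SO$ then $J\SO(\sigma(x))=g(\sigma\SO)=\min_{z\in\s}g(z)$, so $x$ attains the minimal social cost and is therefore a social optimizer.

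The only genuinely delicate step is the converse in part 1. I expect the main difficulty to be recognizing that the aggregate VI inequality can be \emph{decoupled} back into the individual Wardrop conditions: this works precisely because $\s$ is a Minkowski average, so single-agent perturbations of a feasible profile remain feasible and give admissible test points, and because each cost is linear in the agent's own action once the price $p(\sigma\WE+d)$ is fixed. For part 2 the converse is trivial, since the social cost sees the profile only through its average, and the remaining points—convexity of $\s$ and existence of the VI solution—are routine.
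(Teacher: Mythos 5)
Your proof is correct and follows essentially the same route as the paper's: both parts are reduced to variational inequalities on $\Sigma$ for the average, uniqueness of the VI solutions comes from the strong monotonicity (resp.\ strong convexity) in Assumption~\ref{A1}, and the converse inclusions exploit the fact that the equilibrium condition and the social cost see the strategy profile only through its average. The only cosmetic differences are that you decouple the aggregate VI back into per-agent Wardrop conditions by hand, via single-agent deviation test points, where the paper invokes the stacked-VI equivalence from its cited reference, and that in part 2 you minimize $g(z)=p(z+d)^\top(z+d)$ directly over $\Sigma$ rather than writing first-order conditions in the full profile space --- both harmless reformulations of the paper's argument.
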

 \begin{proof}
 {\bf 1)}
  The sets $\mc{X}\i$ are convex and closed by Assumption \ref{A1}; further, for fixed $z\in \s$, the functions $J\i(x\i,z)$ are linear and thus convex in $x\i\in\mc{X}\i$ for all $i\in\{1,\dots,\N\}$. It follows that (see \cite{Gentilearxiv17}) a Wardrop equilibrium $x\WE$ satisfies 
  \vspace*{-1mm}
  \begin{equation}
  [\mathds{1}_{\N}\otimes \,p(\sigma(x\WE)+d)]^\top \!(x-x\WE)\!\ge\!0,~~\forall x\!\in\!\mc{X}.
  	\label{eq:bigvi}
  	\vspace*{-1mm}
  \end{equation}
 %
  Rearranging and dividing by $\N$  we get
$
p(\sigma(x\WE)+d)^\top(\frac{1}{\N}\sum_{j=1}^\N x\i-\frac{1}{\N}\sum_{j=1}^\N x\WE\i)\ge 0,	
$
for all $x\in\mathcal{X}$, or equivalently
$p(\sigma(x\WE)+d)^\top(z-\sigma(x\WE)  )\ge 0,~\forall z\in\s,$
  that is, $\sigma(x\WE)$ solves $\textup{VI}(\s,F\WE)$.
   By Assumption \ref{A1} $F\WE(z)=p(z+d)$ is strongly monotone and $\s$ is closed, convex (since the sets $\mathcal{X}^i$ are closed, convex), hence by \cite{facchinei2007finite} $\textup{VI}(\s,F\WE)$ has a unique solution $\sigma\WE$. By definition of variational inequality, for any $z\in \s$ it holds $p(\sigma\WE+d)^\top(z-\sigma\WE)\ge0$. By definition of $x\WE\in\mc{X}\WE$, we have $\sigma(x\WE)=\sigma\WE$. It follows that $p(\sigma(x\WE)+d)^\top(z-\sigma(x\WE))\ge0$ for any $z\in \s$.  By definition of $\s$, we conclude that \eqref{eq:bigvi} holds for all $x\in\mc{X}$. Thus, $x\WE$ is a Wardrop equilibrium (see \cite{Gentilearxiv17}).
\newline
{\bf 2)} By Assumption \ref{A1}, the set $\mc{X}$ is convex and closed and $J\SO(\sigma(x))$ is  convex. Hence, a social optimizer $x\SO$ satisfies
 \begin{equation}
  \label{eq:bigvi2}
\nabla_x[p(\sigma(x)+d)(\sigma(x)+d)]_{|x={x\SO}}^\top (x-x\SO)\ge0
  ~~\forall x\in\mc{X}\,.
  \end{equation}
 Note that $M \nabla_{x^i} (p(\sigma(x)+d)^\top(\sigma(x)+d)) =p(\sigma(x\SO)+d)+[\nabla_z p(\sigma(x\SO)+d)](\sigma(x\SO)+d)$ for all $i\in\{1,\ldots,M\}$. Consequently, \eqref{eq:bigvi2} is equivalent to
$
  [ p(\sigma(x\SO)+d)+[\nabla_z p(\sigma(x\SO)+d)](\sigma(x\SO)+d)]^\top 
  (\sigma(x)-\sigma(x\SO))\ge0\,,
$
 that is $\sigma(x\SO)$ solves $\textup{VI}(\s,F\SO)$. The remaining claims are proven similarly to 1).
\end{proof}
\vspace*{-3mm}
\section*{Appendix B: Proofs of Theorem \ref{thm:lin}, \ref{thm:polypoa} and \ref{thm:routing}}
%
%
\begin{myproof1}
\newline
{\bf a)} Let $x\WE$ be a Wardrop equilibrium. By Lemma \ref{lemma:averageVI} part 1, $\sigma(x\WE)$ solves $\textup{VI}(\s,F\WE)$. Because of Assumption \ref{ass:lin}, $F\SO(z)=C(z+d)+C^\top(z+d)=2C(z+d)=2F\WE(z)$. Since the two operators $F\WE(z)$ and $F\SO(z)$ are parallel for each $z\in\s$, it follows from the definition of variational inequality that  $\sigma(x\WE)$ must solve $\textup{VI}(\s,F\SO)$ too. Using Lemma \ref{lemma:averageVI} part 2 we conclude that $x\WE$ must be a social optimizer. \\
{\bf b)} By definition $J\SO(\sigma(x\SO))\le J\SO(\sigma(x\NE))$ and so $1\le \poa_\N$.
Further, Assumption \ref{ass:sequence} and the strong monotonicity of $p(z+d)$ (Assumption \ref{A1}) allow us to use the convergence result of \cite[Theorem 1]{Gentilearxiv17}. That is, for any Nash equilibrium $x\NE$ and Wardrop equilibrium $x\WE$ of the game $\mc{G}_\N$, 
$
||\sigma(x\WE)-\sigma(x\NE)||\le \sqrt{2 R^2L_p \alpha^{-1}{\N}^{-1}}.$ It follows that $|J\SO(\sigma(x\NE))- J\SO(\sigma(x\WE))|\le L\SO\sqrt{2 R^2L_p \alpha^{-1}{\N}^{-1}} = c\sqrt{M^{-1}}.$ Since every Wardrop equilibrium is socially optimum (previous point of this proof), one has $|J\SO(\sigma(x\NE))- J\SO(\sigma(x\SO))|\le c\sqrt{M^{-1}}$ and thus
$J\SO(\sigma(x\NE))\le J\SO(\sigma(x\SO))+c\sqrt{M^{-1}}$. The final result regarding the price of anarchy follows from the latter inequality upon dividing both sides by $J\SO(\sigma(x\SO))>\hat J\ge 0$.
\end{myproof1}

\vspace*{2mm}

\begin{myproof2}
\newline
{\bf a)}
We first show that any Wardrop equilibrium is a social optimizer.
To do so, observe that the function $f(y)=\alpha y^k$ satisfies all the assumptions required by Lemma \ref{lemma:averageVI} (see Lemma \ref{lem:ass} in the Appendix).
Let $x\WE$ be a Wardrop equilibrium of $\mc{G}_\N$. By Lemma \ref{lemma:averageVI}, $\sigma(x\WE)$ solves $\textup{VI}(\s,F\WE)$. Thanks to Assumption \ref{ass:nonlin} and the choice of $f(y)$,
\[F\SO(z)\!=\!(k+1)
[
\alpha(z_1+d_1)^k,
\hdots,
\alpha(z_n+d_n)^k
]^\top
\!\!=\!(k+1)F\WE(z)\,.
\]

Hence $\sigma(x\WE)$ solves $\textup{VI}(\s,F\SO)$ too. Using Lemma \ref{lemma:averageVI} we conclude that $x\WE$ must be a social optimizer. 
The proof is now identical to the proof of Theorem \ref{thm:lin}, part b).
\newline
{\bf b)}
  If $f(y)$ does not take the form $\alpha y^k$ for some $\alpha>0$ and $k>0$, by Lemma \ref{lemma:notaligned} there exists a point $\bar z\in\R^n_{>0}$ for which $F\WE(\bar z)$ and $F\SO(\bar z)$ are not aligned, i.e. for which $F\SO(\bar z)\neq h F\WE(\bar z)$ for all $h\in\R$. 
We intend to construct a sequence of games $\mc{G}_\N$ so that for every $\mc{G}_\N$ in the sequence the unique average at the Wardrop equilibrium is exactly $\bar z$, that is  $\bar z$ solves $\textup{VI}(\s,F\WE)$, but $\bar z$ does not solve $\textup{VI}(\s,F\SO)$. This fact indeed proves, by Lemma \ref{lemma:averageVI}, that for any game $\mc{G}_\N$  the Wardrop equilibria of $\mc{G}_\N$ are not social minimizers. 
 Since $\sigma(x\NE)\to\sigma(x\WE)$ as $\N\to\infty$ \cite[Theorem 1]{Gentilearxiv17}, one concludes that $\poa$ cannot converge to~$1$.

In the following we construct a sequence of games with the above mentioned properties.
To this end let us define $\mc{X}\i\coloneqq\bar{\mc{X}}\subseteq \R^n$, so that $\s=\bar{\mc{X}}$ with
 $ \bar{\mc{X}}\coloneqq\{\bar z+\alpha v_1  +\beta v_2~~ \alpha,\beta\in[0~1]\}\cap\R^n_{\ge0},$
  where $v_1\coloneqq\bar F\WE$, $v_2\coloneqq(\bar F\WE^\top\bar F\SO )\bar F\WE- (\bar F\WE^\top \bar F\WE)\bar F\SO $ and $\bar F\WE\coloneqq F\WE(\bar z)$, $\bar F\SO\coloneqq F\SO(\bar z)$; see Figure \ref{fig:setX}. The intuition is that $-v_2$ is the component of $\bar F_S$ that lives in the same plane as $\bar F_S$ and $\bar F_W$ and is orthogonal to $\bar F_W$, so that $\bar F_W^\top v_2=0$.
  \begin{figure}[h!]
        \centering
        \includegraphics[scale=0.8]{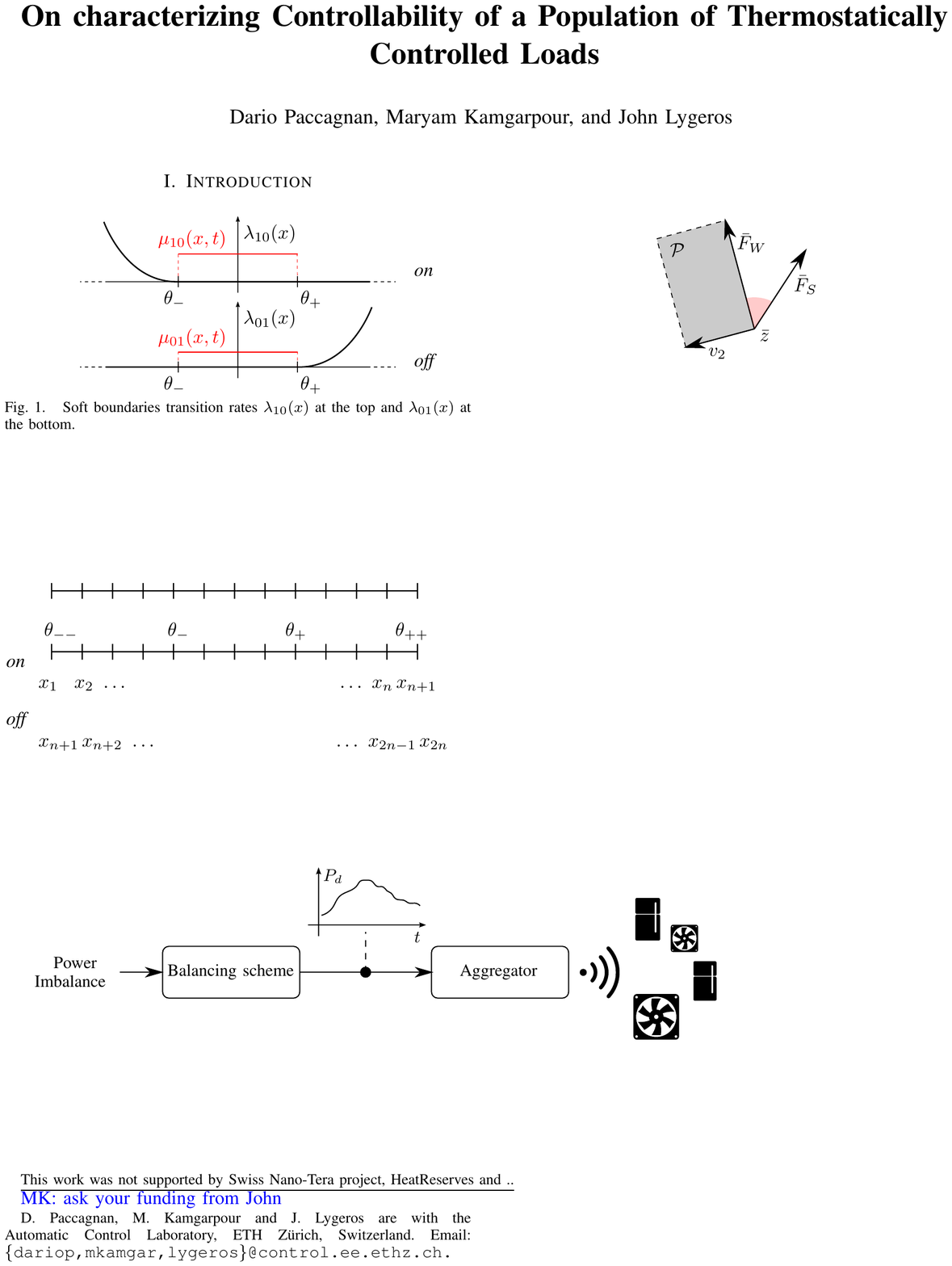}
        \vspace*{-2mm}
        \caption{Construction of the set $\bar{\mc{X}}$.}
        \label{fig:setX}
        \vspace*{-7mm}
   \end{figure}
    Observe that $\s=\bar{\mc{X}}$ is the intersection of a bounded and convex set with the positive orthant and thus satisfies Assumptions \ref{A1}, \ref{ass:sequence} and \ref{ass:nonlin}.
  It is easy to verify that $\bar z \in \bar{\mc{X}}$ and that $F\WE(\bar z)^\top(z-\bar z)=\alpha ||F\WE(\bar z)||^2\ge0$ for all $z\in \s=\bar{\mc{X}}$, so that $\bar z$ solves $\textup{VI}(\s,F\WE)$. Let us pick $\hat z=\bar z+ \beta v_2$. Note that since $\bar z>0$, for $\beta$ small enough  $\hat z$ belongs to $\R^n_{>0}$ as well and thus to $\bar{\mc{X}}$. Then $F\SO(\bar z)^\top(\hat z-\bar z)=\beta (\bar F\SO^\top\bar F\WE)^2-\beta ||\bar F\SO||^2||\bar F\WE||^2< 0$. The inequality is strict because $\bar F\WE$, $\bar F\SO$ are neither parallel nor zero (Lemma \ref{lemma:notaligned}). 
Thus, $\bar z$ does not solve $\textup{VI}(\s,F\SO)$.
\end{myproof2}
\begin{lemma}
\label{lemma:notaligned}
For $n\ge 2$, if $f(y)$ satisfies Assumptions \ref{A1}, \ref{ass:sequence} and \ref{ass:nonlin}, but does not take the form $\alpha y^k$ for some $\alpha>0$ and $k>0$, then there exists $\bar z \in\R^n_{>0}$ such that $F\SO(\bar z)\neq h F\WE(\bar z)$, $\forall h\in\R$. Moreover, $F\SO(\bar z)\neq 0$, $F\WE(\bar z)\neq 0$. 
\end{lemma}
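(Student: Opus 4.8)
The plan is to translate the geometric non-alignment condition into a scalar property of $f$, and then exploit the freedom provided by $n\ge2$. Under Assumption \ref{ass:nonlin} the Jacobian $\nabla_z p(z+d)$ is diagonal with $t$-th diagonal entry $f'(z_t+d_t)$, so componentwise
\begin{equation*}
[F\WE(z)]_t = f(z_t+d_t), \qquad [F\SO(z)]_t = f(z_t+d_t)+(z_t+d_t)f'(z_t+d_t).
\end{equation*}
Since $f>0$, the vector $F\WE(z)$ has strictly positive entries and hence never vanishes, and $F\SO(z)=h\,F\WE(z)$ for some $h\in\R$ holds if and only if $[F\SO(z)]_t = h\,[F\WE(z)]_t$ for every $t$, i.e. if and only if the scalar quantity $g(y):= y f'(y)/f(y)$ takes the common value $h-1$ at all the points $y=z_t+d_t$. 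Thus $F\SO(\bar z)$ and $F\WE(\bar z)$ fail to be aligned precisely when $g$ does not take the same value at all coordinates $\bar z_t+d_t$.

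The first key step is to show that $g$ is constant on $\R_{>0}$ if and only if $f$ is a positive monomial. One direction is immediate: for $f(y)=\alpha y^k$ one computes $g(y)\equiv k$. For the converse, $g(y)\equiv c$ rewrites as the separable identity $f'(y)/f(y)=c/y$; since $f>0$ is continuously differentiable, integrating $\tfrac{d}{dy}\log f(y)=c/y$ yields $\log f(y)=c\log y + \text{const}$, hence $f(y)=\alpha y^{c}$ with $\alpha>0$. Consequently, the hypothesis that $f$ is not of the form $\alpha y^k$ forces $g$ to be non-constant, so there exist $y_a\neq y_b$ in $\R_{>0}$ with $g(y_a)\neq g(y_b)$.

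With these two witnesses in hand, I would use $n\ge2$ to build $\bar z$: choose the first two coordinates so that $\bar z_1+d_1=y_a$ and $\bar z_2+d_2=y_b$, and fix the remaining coordinates to any values keeping $\bar z\in\R^n_{>0}$. Then $g(\bar z_1+d_1)\neq g(\bar z_2+d_2)$, so by the first paragraph no scalar $h$ aligns the two operators, giving $F\SO(\bar z)\neq h\,F\WE(\bar z)$ for all $h$. The non-vanishing claims follow at once: $F\WE(\bar z)\neq 0$ because $f>0$, and $F\SO(\bar z)\neq 0$ because otherwise $F\SO(\bar z)=0\cdot F\WE(\bar z)$ would itself be an alignment, contradicting what was just established.

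The main obstacle I anticipate is not the algebra but the \emph{accessibility} of the two witnesses: since $\bar z\in\R^n_{>0}$ and $d\in\R^n_{>0}$, the coordinate $\bar z_t+d_t$ ranges only over $(d_t,\infty)$, so I must guarantee that the points $y_a,y_b$ at which $g$ differs can be realized above the shifts $d_1,d_2$. This is where the flexibility in constructing the counterexample game enters: non-constancy of $g$ on $\R_{>0}$, together with the freedom to place the relevant non-constant interval of $g$ inside the accessible region (by choosing $d$ within the positive orthant), secures admissible witnesses. Apart from this placement argument, the remaining verification that the resulting data still meets Assumptions \ref{A1}, \ref{ass:sequence} and \ref{ass:nonlin} is routine.
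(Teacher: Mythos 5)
Your argument is, in substance, the paper's own proof reorganized contrapositively. The paper assumes alignment $F\SO(z)=\beta(z)F\WE(z)$ for all $z$ (and all $d\in\R^n_{>0}$), observes that for $n\ge 2$ the coordinatewise identity forces $\beta$ to be a constant $b$, and then solves the ODE $yf'(y)=(b-1)f(y)$; your scalar ratio $g(y)=yf'(y)/f(y)$ is exactly the paper's $\beta(z)-1$, and your two-witness construction is the contrapositive of the paper's "function of $z_1$ alone $=$ function of $z_2$ alone $\Rightarrow$ constant" step. Your worry about accessibility of the witnesses, resolved by choosing $d$, is also faithful to the paper: its contradiction hypothesis is quantified over all $d\in\R^n_{>0}$, so both proofs read the lemma with $d$ at one's disposal as part of the counterexample data.

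There is, however, one incomplete inference. Integrating $g\equiv c$ gives $f(y)=\alpha y^{c}$ with $\alpha>0$ but with \emph{no sign constraint on} $c$. If $c\le 0$, this $f$ is not of the form $\alpha y^{k}$ with $\alpha,k>0$, so your sentence "the hypothesis that $f$ is not of the form $\alpha y^k$ forces $g$ to be non-constant" does not follow as written: a non-monomial hypothesis (with positive exponent) is compatible with $g$ being constant at a value $c\le 0$. To close the gap you must invoke Assumption \ref{A1}: strong monotonicity of $z\mapsto p(z+d)$ requires $f$ to be (strongly) increasing, hence $f'>0$ and $g>0$, so a constant $g$ has value $c>0$, and then $f=\alpha y^{c}$ with $\alpha,c>0$ genuinely contradicts the hypothesis. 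The paper makes precisely this step ("if $a\le 0$ or $b\le 1$, Assumption \ref{A1} is not satisfied"). With that one-line addition, your proof is complete; the remaining parts (non-vanishing of $F\WE$ from $f>0$, and of $F\SO$ by taking $h=0$) coincide with the paper's.
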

\vspace*{-1.5mm}
\begin{proof}
Let us consider the first statement.
By contradiction, assume there exists $\beta(z):\R^n_{>0}\rightarrow\R$ such that $F\SO(z)= \beta(z) F\WE(z)$ for all $z \in\R^n_{>0}$. This implies
\vspace*{-0.5mm}
\begin{equation}
\label{eq:alphas}
f'(z_t+d_t)(z_t+d_t)=(\beta(z_1,\dots,z_n)-1)f(z_t+d_t)\,,
\vspace*{-0.5mm}
\end{equation}

for all $t\in\{1,\dots,n\}$ and for all $z \in\R^n_{>0}$, $d\in\R^n_{>0}$. 
By Assumption \ref{ass:nonlin}, $f(z_t+d_t)>0$.
Hence one can divide \eqref{eq:alphas} for $f(z_t+d_t)$ without loss of generality, and conclude that $\beta(z_1,\dots,z_n)=\beta_1(z_1)=\dots=\beta_n(z_n)$ with $\beta_i:\R\rightarrow\R$ for all $z\in\R^n_{>0}$.
For $n\ge2$ the last condition implies $\beta(z_1,\dots,z_n)=b$ constant. Equation \eqref{eq:alphas} reads as $
f'(y)y=(b-1)f(y)~~\forall y>0$, 
whose continuously differentiable solutions are all and only $f(y)=a y^{b-1}$. Note that if $a\le0$ or $b\le 1$, Assumption \ref{A1} is not satisfied, while if $a>0$ and $b>1$ we contradicted the assumption that $f(y)$ did not take the form $\alpha y^k $ for some $\alpha>0$ and $k>0$. 
Setting $h=0$ in the previous claim gives $F\SO(\bar z)\ne0$.
Since $f:\R_{>0}\rightarrow\R_{>0}$, one has $F\WE(\bar z):=[ f(\bar z_t+d_t)]_{t=1}^n\neq0$.
\end{proof}
\vspace*{-1mm}
\begin{lemma}\label{lem:ass}
Suppose that the price function $p$ is as in Assumption \ref{ass:nonlin} with $f(y)=\alpha y^k$, $\alpha>0,k>0$. Then $p$ satisfies Assumption \ref{A1} and \ref{ass:sequence}.
\end{lemma}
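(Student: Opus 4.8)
The plan is to reduce the statement to a few one-dimensional checks, exploiting the fact that under Assumption \ref{ass:nonlin} the map $z\mapsto p(z+d)$ is \emph{separable}, with identical scalar component $f(y)=\alpha y^k$. The parts of Assumptions \ref{A1} and \ref{ass:sequence} concerning the constraint sets $\mc{X}\i$ (closedness, convexity, nonemptiness, and the existence of the compact $\mc{X}_0$) are structural and do not involve $p$; hence what remains to verify is: (i) $z\mapsto p(z+d)$ is continuously differentiable and strongly monotone on $\s$; (ii) $z\mapsto p(z+d)^\top(z+d)$ is strongly convex on $\s$; and (iii) $J\i(x\i,\sigma(x))$ is convex in $x\i$. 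First I would record the key domain observation: since $\mc{X}\i\subseteq\R^n_{\ge0}$ we have $\s\subseteq\R^n_{\ge0}$, and since $\s\subseteq\mc{X}_0$ with $\mc{X}_0$ compact, every $z\in\s$ satisfies $0<\min_t d_t\le z_t+d_t\le Y_{\max}$ componentwise for a finite constant $Y_{\max}$ depending only on $\mc{X}_0$ and $d$. Thus the argument of $f$ always ranges over a fixed compact interval $I\subset(0,\infty)$ that is bounded away from the origin, uniformly over the whole sequence $(\mc{G}_\N)_{\N=1}^\infty$.

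Next I would dispatch (i) and (ii) coordinatewise. Continuous differentiability is immediate because $f(y)=\alpha y^k$ is smooth on $(0,\infty)$ and each component $z_t\mapsto f(z_t+d_t)$ is $C^1$ on a neighborhood of $\s$. For strong monotonicity I would write $(p(z+d)-p(w+d))^\top(z-w)=\sum_t (f(z_t+d_t)-f(w_t+d_t))(z_t-w_t)$, apply the mean value theorem per coordinate, and use that $f'(y)=\alpha k y^{k-1}$ attains a strictly positive minimum $\mu$ over the compact interval $I$; this yields $(p(z+d)-p(w+d))^\top(z-w)\ge \mu\|z-w\|^2$. For (ii), note that $p(z+d)^\top(z+d)=\sum_t\alpha(z_t+d_t)^{k+1}$ is separable with diagonal Hessian whose $t$-th entry is $\alpha k(k+1)(z_t+d_t)^{k-1}$, again bounded below by a positive constant on $I$; strong convexity follows.

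Finally, for (iii) I would again use separability: with $x^{-i}$ fixed, $J\i(x\i,\sigma(x))=\sum_t \alpha\bigl(\tfrac1\N x\i_t+a_t\bigr)^k x\i_t$, where $a_t\coloneqq \tfrac1\N\sum_{j\neq i}x\j_t+d_t\ge d_t>0$ does not depend on $x\i$, so it suffices to check convexity of the scalar map $\phi(s)=\alpha(s/\N+a)^k s$ on $s\ge0$. A direct differentiation gives $\phi''(s)=\alpha\tfrac{k}{\N}(s/\N+a)^{k-2}\bigl[(k+1)s/\N+2a\bigr]$, which is strictly positive for all $s\ge0$ because $a>0$, $k>0$, $\alpha>0$; hence $J\i$ is convex in $x\i$. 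I expect the only genuine subtlety — the main obstacle — to be guaranteeing that all three moduli ($\mu$, the strong-convexity constant, and the sign of $\phi''$) remain strictly positive: this is exactly where the boundedness of $\s$ inherited from the compact $\mc{X}_0$ and the strict positivity $d\in\R^n_{>0}$ enter, since for $k<1$ the derivative $f'$ degenerates at infinity, while for $k<2$ the factor $(z_t+d_t)^{k-1}$ (respectively $(s/\N+a)^{k-2}$) would blow up or vanish at the origin were the argument not bounded away from zero.
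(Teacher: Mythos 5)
Your proposal is correct and follows essentially the same route as the paper: exploit separability to reduce everything to scalar derivative checks, showing that $\nabla_z p(z+d)$ and the relevant Hessians are diagonal with positive entries $\alpha k (z_t+d_t)^{k-1}$, $\alpha k(k+1)(z_t+d_t)^{k-1}$, and (for $J^i$ in $x^i$) the quantity you compute as $\phi''$. The only difference is one of explicitness: the paper states the positive-definiteness claims and defers details to its extended version, whereas you spell out the step the paper leaves implicit, namely that the uniform (not merely pointwise) positivity needed for \emph{strong} monotonicity and \emph{strong} convexity comes from compactness of $\mathcal{X}_0$ together with $d\in\mathbb{R}^n_{>0}$, which keeps $z_t+d_t$ in a compact interval bounded away from zero.
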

\begin{proof}
Note that $\nabla_z p(z+d)$ is a diagonal matrix with entry $f'(z_t+d_t)$ in position $(t,t)$.  Since $f'(y)=\alpha k y^{k-1}>0$ for all $y>0$ and since $z_t+d_t$ is positive by assumption for all $t$, we get that $p(z+d)$ is continuously differentiable and that $\nabla_z p(z+d)\succ 0$ i.e. that $z\mapsto p(z+d)$ is strongly monotone. Similarly, one can show that the Hessian of $p(z+d)^\top (z+d)$ and the Hessian of $J^i(x^i,\sigma(x))$ with respect to $x\i$ are positive definite. Thus, $z\mapsto p(z+d)^\top (z+d)$ and $x^i \mapsto J^i(x^i,\sigma(x))$ are strongly convex. See \cite{paccagnan2018efficiency} for further details.
%
%
\end{proof}
\vspace*{2mm}
\begin{myproof3} 
\fra{
\newline
We prove only  {a)} as  {b)} can be shown as in Theorem~1b).
We define
$C^{\sigma_1}(\sigma_2):=p(\sigma_1+d)^\top(\sigma_2+d)$
so that $J_S(\sigma)=C^{\sigma}(\sigma)$.
Let $x_W$ be any Wardrop equilibrium. Then, the average $\bar \sigma:=\sigma_W$ solves VI$(\Sigma,F_W)$ i.e.
$F_W(\bar \sigma)^\top(\sigma-\bar \sigma)\ge 0,~ \forall \sigma\in\Sigma.$
This can be seen following the proof of Lemma \ref{lemma:averageVI} part 1), and observing that only convexity and closedness of $\mc{X}^i$ are required.
Equivalently,
$  J_S(\bar \sigma) \le C^{\bar \sigma}(\sigma),~\forall \sigma\in\Sigma.$
However,
$C^{\bar \sigma}(\sigma)=\sum_{t} l_t(\bar \sigma_t+d_t) (\sigma_t+d_t)
=J_S(\sigma)+ \sum_{t} [l_t(\bar \sigma_t+d_t) - l_t( \sigma_t+d_t)] (\sigma_t+d_t) 
=J_S(\sigma)+ \sum_{t} \frac{[l_t(v_t) - l_t(w_t)]w_t  }{l_t(v_t)v_t}   l_t(v_t) v_t 
\le J_S(\sigma)+ \sum_{t} \beta(\mathcal{L})   l_t(v_t) v_t 
=  J_S(\sigma)+  \beta(\mathcal{L}) J_S(\bar\sigma)
$
where we use $v_t:=\bar \sigma_t+d_t\ge d_t$, $w_t:= \sigma_t+d_t\ge d_t$ and $d_t\ge 0$.
The previous relation holds for all $\sigma\in\Sigma$. Selecting $\sigma=\sigma_S$ (the optimum average), we get
$ J_S(\bar \sigma)\! \le \!J_S(\sigma_S)\!+\!  \beta(\mathcal{L}) J_S(\bar\sigma).$
Rearranging we obtain \eqref{eq:stepThm3}.
}
\end{myproof3}

\vspace*{-2mm}
\bibliographystyle{IEEEtran}
\bibliography{References}

\end{document}